\documentclass[12pt]{article}

\usepackage{amsmath,amsthm,amssymb}
\usepackage{graphicx,psfrag,epsf}
\usepackage{enumerate}
\usepackage{natbib}

\newcommand{\blind}{0}

\usepackage[T1]{fontenc}
\usepackage{url}
\usepackage{booktabs}
\usepackage{amsfonts}
\usepackage{nicefrac}
\usepackage{microtype}
\usepackage{hyperref}
\usepackage{array}
\usepackage{graphics}
\usepackage{listings}
\usepackage{tabu}
\usepackage{bm}
\usepackage[table]{xcolor}
\usepackage{color}
\usepackage{colortbl}
\usepackage{mathrsfs}
\usepackage{accents}
\usepackage{tikz}
\usetikzlibrary{shapes}
\usepackage{xspace}
\usepackage{bbm}
\usepackage{caption}
\usepackage{subcaption}
\usepackage{placeins}
\usepackage{multirow}
\usepackage{arydshln}
\usepackage[utf8]{inputenc}

\newtheorem{proposition}{Proposition}

\begin{document}

\def\spacingset#1{\renewcommand{\baselinestretch}%
{#1}\small\normalsize} \spacingset{1}


\if0\blind
{
  \title{\bf \emph{CavMerge}: Merging K-means Based on Local Log-Concavity}
  \author{
    Zhili Qiao\thanks{
      Zhili Qiao is Ph.D.\ Candidate, Department of Statistics, Iowa State University, Ames, IA 50010 (E-mail: zlqiao819@gmail.com).
      Wangqian Ju is Ph.D.\ Candidate, Department of Statistics, Iowa State University, Ames, IA 50010 (E-mail: wju@iastate.edu).
      Peng Liu is Associate Professor, Department of Statistics, Iowa State University, Ames, IA 50010 (E-mail: pliu@iastate.edu).
    }\hspace{.2cm}\\
    Department of Statistics, Iowa State University\\
    and \\
    Wangqian Ju \\
    Department of Statistics, Iowa State University\\
    and \\
    Peng Liu \\
    Department of Statistics, Iowa State University}
  \maketitle
} \fi

\if1\blind
{
  \bigskip
  \bigskip
  \bigskip
  \begin{center}
    {\LARGE\bf \emph{CavMerge}: Merging K-means Based on Local Log-Concavity}
  \end{center}
  \medskip
} \fi

\bigskip

\begin{abstract}
K-means clustering, a classic and widely-used clustering technique, is known to exhibit suboptimal performance when applied to non-linearly separable data. Numerous adjustments and modifications have been proposed to address this issue, including methods that merge K-means results from a relatively large K to obtain a final cluster assignment. However, existing methods of this nature often encounter computational inefficiencies and suffer from hyperparameter tuning. Here we present \emph{CavMerge}, a novel K-means merging algorithm that is intuitive, free of parameter tuning, and computationally efficient. Operating under minimal local distributional assumptions, our algorithm demonstrates strong consistency and rapid convergence guarantees. Empirical studies on various simulated and real datasets demonstrate that our method yields more reliable clusters in comparison to current state-of-the-art algorithms.
\end{abstract}

\noindent%
{\it Keywords:}  Cluster analysis; K-means merging; Log-concavity; Syncytial clustering; Unsupervised learning

\vfill

\spacingset{1.45}

\section{Introduction}
\label{sec:intro}

Cluster analysis stands as a cornerstone method in data mining and unsupervised machine learning, providing valuable insights by identifying patterns, structures, or relationships within intricate datasets. In recent years, the application of cluster analysis has proliferated across various disciplines, including but not limited to bioinformatics, finance, marketing, and social network analysis.

The K-means clustering algorithm \citep{MacQueen1967, Lloyd1982, Hartigan1979} has been recognized as one of the most efficient and widely-used clustering techniques. Given a dataset \(X_{n \times p}\) and a predetermined number of clusters \(K\), it assigns data points to clusters based on the proximity to the closest center, using Euclidean distance as the measure. Over time, this method has been refined and extended in various ways, including Bayesian reformulations \citep{Kulis2012, Kurihara2009}, improvements in initializations \citep{Arthur2007, Celebi2013}, and advancements in feature selection \citep{Witten2010, Zhang2020}.

Consider a dataset \(X_{n \times p}\) with \(n\) observations and \(p\) features. A K-means clustering with \(K\) clusters iterates the following two steps until convergence.

\paragraph{K-means clustering.}
\begin{enumerate}
    \item Assign each observation \(x_i, i = 1,...,n\) to the cluster center (mean of all points in the cluster) with minimum Euclidean distance: \(argmin_{1 \leq k \leq K} ||X_i - \mu_k||\);

    \item Recalculate cluster centers \(\mu_k = \frac{\sum_{i \in C_k} X_i}{|C_k|}, k =1,...,K\), where \(C_k\) is the set of all observations in the \(k^{th}\) cluster, and \(|C_k|\) is its cardinality.
\end{enumerate}

While the K-means clustering algorithm is computationally efficient and straightforward to interpret, it does harbor two significant limitations:

\begin{itemize}
    \item The output clusters from the K-means algorithm are convex sets.

    The K-means algorithm segments the \(R^p\) space into \(K\) exhaustive and exclusive convex subspaces: \(\forall x, y \in C_k, \forall \alpha \in [0, 1], \alpha x + (1-\alpha)y \in C_k\).

    This implies that K-means is primarily suited to handle linearly separable data.
    
    \item K-means exhibits sub-optimal performance with unbalanced data.
    
    Given that the K-means algorithm relies on minimizing Euclidean distance to cluster centers, it has been observed that K-means can perform poorly when the dataset contains unbalanced true classes \citep{Jain1999, Chiang2010}.

\end{itemize}

Several approaches have been proposed to mitigate these issues. For example, the application of kernel methods has been suggested to handle non-convexity \citep{Dhillon2004, Tzortzis2009}, while the use of weighted K-means has been advocated for adjusting unequal-sized clusters \citep{Zhang1996, Lu2021}. Nevertheless, most of these approaches are case-specific in determining kernels or weights and often require additional efforts in hyperparameter tuning.

Recently, a new approach called syncytial clustering has been proposed. It utilizes K-means clustering results as initial inputs and then adopts a merging process to obtain the final cluster assignments.

\paragraph{Merging K-means.} Merging results from K-means clustering allows for the formation of clusters of varying shapes, thereby addressing the two primary limitations of K-means. This type of method starts from a K-means result with a relatively large \(K\), and uses some methods to determine the merging status of pairs of clusters. Each cluster in the final result is a combination of one or more initial K-means clusters. \citet{Peterson2018} proposed to use hierarchical clustering on K-means results; \citet{AlmodovarRivera2020} proposed utilizing an overlap-based estimate to determine the merging status of K-means clusters; \citet{Wei2023} proposed merging based on a dimension-free surrogate density measure.

Beyond K-means, merging algorithms have also become prominent within model-based clustering frameworks. For instance, \citet{Baudry2010} proposed a methodology for merging components within Gaussian mixtures, while \citet{Melnykov2016} advocated for the fusion of model components based on pairwise overlap.

\paragraph{Our contribution.}

Here we propose a novel method for merging K-means results based on local log-concavity and call it the \emph{CavMerge} algorithm. Our method is non-iterative and involves no estimation of density or distribution of any kind. Like many other K-means merging algorithms, CavMerge estimates a merging score for each pair of clusters. In contrast to the intricate kernel smoothing processes utilized in \citet{Peterson2018} and \citet{AlmodovarRivera2020}, our method incorporates a technique that is significantly more straightforward and computationally efficient. In addition, our method maintains a strong theoretical foundation and achieves superior clustering performance.

After we developed our algorithm, we noticed that a recently proposed \emph{Skeleton Clustering} algorithm by \citet{Wei2023} shares some similarity with our method. However, our motivation and fundamental idea come from a completely different perspective. Instead of a density measure, our method utilizes a fully non-parametric pairwise similarity measure that is scale-free, more intuitive, and easier to implement. We will discuss this difference in more detail in Section \ref{sec:conc} and the Appendix. 

The rest of this paper is structured as follows. In Section \ref{sec:notation}, we introduce formal notations and definitions of K-means. In Section \ref{sec:meth}, we describe the intuition of this method and give some theoretical background. In Section \ref{sec:algo}, we specify and explain our merging algorithm. In Section \ref{sec:experiments}, we provide some experimental results to show the effectiveness of our method. Finally in Section \ref{sec:conc}, we conclude and discuss some possible extensions.

\section{Notations}
\label{sec:notation}

In this section, we describe some notations about K-means clustering for use throughout our paper. 

Let \(X_{n \times p} = (x_1,...,x_n)'\) denote the data matrix with \(n\) observations and \(p\) features, each \(x_i \in R^p\) is an observation in \(p\) dimensional feature space. We assume that those \(n\) data points come from a mixture of \(G\) true classes, each with probability density function \(f_g(\cdot)\): \(x_i \sim \sum^G_{g=1} \pi_g f_g (y), \pi_g > 0, \sum^{G}_{g=1} \pi_g = 1\) . We are interested in clustering \(x_i\)'s into several exclusive and exhaustive clusters. 

For a standard K-Means clustering result on \(X_{n \times p}\) with \(K\) clusters: 

\begin{itemize}
    \item Let \(C_k = \{x_i:x_i\in\ cluster\ k\}\) denote all the observations \(x_i\)'s in cluster \(k\), \(k = 1,...,K\);

    \item Let \(\mu_k \in R^p\) denote the center of cluster \(k\): \(\mu_k = \frac{1}{|C_k|} \sum_{i \in C_k} x_i\) 

    \item For each pair of clusters \((C_{k_1}, C_{k_2}), k_1 \neq k_2\), we define its \textbf{decision boundary} \(D_{k_1 k_2}\) as the \((p-1)\) dimensional hyperplane that is equal-distant to their cluster centers: \(D_{k_1 k_2} = \{ x \in R^p: ||x - \mu_{k_1}|| = ||x - \mu_{k_2}|| \}\).
    
    \item A decision boundary \(D_{k_1 k_2}\) is called a \textbf{nontrivial decision boundary}, if there exists \(x \in D_{k_1 k_2}\) such that \(||x - \mu_{k_1}|| = ||x - \mu_{k_2}|| < ||x - \mu_{k'}||, \forall k' \neq k_1, k_2\).

    \item We call a pair of clusters \((C_{k_1}, C_{k_2})\) an \textbf{adjacent pair}, if their share a nontrivial decision boundary.
\end{itemize}

\section{Methodology and Theoretical Background}
\label{sec:meth}

In this section, we introduce our main idea and provide some theoretical background for our algorithm that will be presented in the next section. The main idea comes from \textbf{log-concavity}, a property that many commonly-used distributions satisfy. We say a distribution has a log-concave probability density/mass function if its logarithm is concave, i.e.,

\begin{align*}
    \gamma \log f(x) + (1-\gamma) \log f(y) \leq \log f(\gamma x + (1-\gamma) y), \forall x,y \in R^p, \forall \gamma \in [0, 1]
\end{align*}

Log-concavity is a very weak distributional assumption. Many commonly-used distributions are log-concave, see \citet{Bagnoli2005}.

In the above inequality, setting \(\gamma = \frac{1}{2}\) and taking exponential on both sides yields \(f(x)f(y) \leq f^2(\frac{x+y}{2})\). If we replace \((x,y)\) by \((x-a, x+a)\) and perform a double integral on \(x \in Q\) for some \(Q \subset R^p\), this further yields (if denominator is non-zero) 

\begin{align}
    \frac{[\int_Q f(x) dx]^2}{\int_{Q} f(x-a) dx \int_{Q}f(x+a) dx} = \frac{[\int_Q f(x) dx]^2}{\int_{Q-a} f(x) dx \int_{Q+a} f(x) dx} \geq 1, \forall x,a \in R^p, \forall Q \subset R^p
\end{align}

Now following the cluster analysis framework, suppose we have a data matrix \(X_{n \times p}\) and its clustering result \((C_1,..., C_K)\) from a standard K-means clustering algorithm. For two adjacent clusters \(C_{k_1}, C_{k_2}\) with centers \(\mu_{k_1}, \mu_{k_2}\), consider \(Q-a\) as a subspace surrounding \(\mu_{k_1}\), where \(2a = \mu_{k_2} - \mu_{k_1}\). Then it is clear that the two terms in the denominator of~(1) are the cumulative densities around cluster centers \(\mu_{k_1}\) and \(\mu_{k_2}\), and the numerator is the square of cumulative density around their decision boundary. A discrete version of this fraction, the ratio of the number of observations, may serve as a score to identify the merging status of two clusters. 

\paragraph{Theorem 1.} Assume that a \(p\)-dimensional random variable \(X\) has log-concave pdf \(f(\cdot)\), and let \(x_1, ..., x_n \in R^p\) denote a realization of this distribution with \(n\) observations. For any subspace \(Q \subset R^p\) and any vector \(a \in R^p\), let

\begin{align*}
    m_1 = \sum^n_{i=1} 1(x_i+a \in Q),\quad m_2 = \sum^n_{i=1} 1(x_i \in Q),\quad m_3 = \sum^n_{i=1} 1(x_i-a \in Q)
\end{align*}

denote the number of observations in the three subspaces \(Q-a, Q, Q+a\), respectively. Furthermore, assume that the cumulative densities on those three subspaces are all positive. Then we have the following: 

\begin{enumerate}
    \item[(a)] For any given \(\epsilon,\delta > 0\) and if \(m_1 m_3 > 0\), there exists a \(N\) such that for all \(n>N\), we have \(P(\frac{m^2_2}{m_1 m_3} > 1-\epsilon) > 1-\delta\).
    
    \item[(b)] The convergence rate of \(\frac{m^2_2}{m_1 m_3}\) to \(C = \frac{(\int_{U} f(x) dx)^2}{\int_{U-a} f(x) dx \int_{U+a} f(x) dx}\) is \(O(\frac{1}{\sqrt{n}})\).
\end{enumerate}

The proof of Theorem 1 is provided in Appendix~\ref{app:proof_thm1}.

Theorem 1 above provides the convergence guarantee of ``ratio of points'' to ``ratio of densities'', which is the foundation of our \emph{CavMerge} algorithm.

\section{\emph{CavMerge} Algorithm Based on Log-Concavity}
\label{sec:algo}

In this section, we propose \emph{CavMerge}, an algorithm for merging K-means results based on local log-concavity. This is a non-iterative algorithm that does ``nothing but counting points''. This algorithm aims to detect pairs of clusters that are adjacent and are from the same distribution. It determines pairwise merging status by computing a score based on the local concentration of observations around cluster centers and around boundaries. In the end, a merging status matrix is formed, and we merge all positive pairs to get the final cluster assignment.

First, we specify two notations for computation:

\begin{itemize}
    \item For two \(p\) dimensional vectors \(x\) and \(y\), let \(P_y(x) = \frac{|x^T y|}{||y||}\) denote the projection length of \(x\) on \(y\);

    \item For a \(p\) dimensional point \(x\) and a \(1\) dimensional line \(l_{ab}\) determined by two different points \(a,b \in R^p\), let \(d(x, l_{ab}) = \sqrt{||x - a||^2 - \frac{|(x-a)^T(b-a)|^2}{||b-a||^2}}\) denote the projection distance from \(x\) to \(l_{ab}\).
\end{itemize}

Next, we propose our algorithm that consists of 5 steps and then explain each step in the following subsections.

\paragraph{\emph{CavMerge} algorithm: merging K-means based on local log-concavity.}

\begin{enumerate}
    \item[Step 1:] \emph{Initialize.}
    
    Set \(K_{max} = \max(\lfloor n \rfloor, 30)\). For each \(K = 1,2,..., K_{max}\), perform standard K-means clustering on data matrix \(X_{n \times p}\). Use jump statistic \citep{Sugar2003} to identify the optimal \(K\) and its corresponding cluster assignments \(\{C_{1}, ..., C_{K}\}\)

    \item[Step 2:] \emph{Identify Adjacent Clusters.}

    For each observation \(X_i, i = 1,...,n\), calculate its distance to all initial cluster centers \(\mu_1,...,\mu_K\). Find the first two minimum distances, denote the corresponding clusters as \((C_{k_1(i)}, C_{k_2(i)}), i = 1,...,n\)
    
    We identify \(A = \{\cup^n_{i=1} (C_{k_1(i)}, C_{k_2(i)})\}\) as the collection of all adjacent cluster pairs.

    \item[Step 3:] \emph{Calculate Merging Scores.} 

    For each adjacent pair of clusters \((C_{k_1}, C_{k_2}) \in A\):

    \begin{enumerate}
        \item Calculate their centers \(\mu_{k_1}, \mu_{k_2}\); let \(l_{\mu_{k_1}\mu_{k_2}}\) denote the 1-dimensional line determined by \(\mu_{k_1}\) and \(\mu_{k2}\).
    
        \item For each point \(x_i \in C_{k_1} \cup C_{k_2}\), calculate its distance \(d(x_i, l_{\mu_{k_1}\mu_{k_2}})\) to \(l_{\mu_{k_1}\mu_{k_2}}\). Let \(r_{k_1 k_2} = \max_{x_i \in C_{k_1} \cup C_{k_2}} d(x_i, l_{\mu_{k_1} \mu_{k_2}})\).
    
        \item Find the number of points \(m_1, m_2, m_3\) that satisfy the following conditions:
        
        \begin{align*}
            m_1 &= \sum^n_{i=1} 1 \{d(x_i, l_{\mu_{k_1}\mu_{k_2}}) < r_{k_1 k_2}, P_{\mu_{k_2} - \mu_{k_1}}(x_i - \mu_{k_1}) < ||\frac{\mu_{k_2} - \mu_{k_1}}{4}||\} \\
            m_2 &= \sum^n_{i=1} 1 \{d(x_i, l_{\mu_{k_1}\mu_{k_2}}) < r_{k_1 k_2}, P_{\mu_{k_2} - \mu_{k_1}}(x_i - \frac{\mu_{k_1} + \mu_{k_2}}{2}) < ||\frac{\mu_{k_2} - \mu_{k_1}}{4}||\} \\
            m_3 &= \sum^n_{i=1} 1 \{d(x_i, l_{\mu_{k_1}\mu_{k_2}}) < r_{k_1 k_2}, P_{\mu_{k_2} - \mu_{k_1}}(x_i - \mu_{k_2}) < ||\frac{\mu_{k_2} - \mu_{k_1}}{4}||\}
        \end{align*}
    
        \item Calculate the score \(s_{k_1 k_2} = \frac{m^2_2}{m_1 m_3}\)
    \end{enumerate}

    Output a \(K \times K\) symmetric merging score matrix \(S\) (set diagonal elements to \(\infty\) and scores for non-adjacent pairs to \(0\)).

    \item[Step 4:] \emph{Adjust for Corner Cases.}

    For each cluster \(C_k\) with \(|C_k| \leq 3\), find its closest neighboring cluster \(C_{k'}: argmin_{k' \neq k} ||\mu_k - \mu_{k'}||\). Set \(s_{kk'} = \infty\).
    
    \item[Step 5:] \emph{Output Final Clusters.} 

    Use \(\frac{1}{s_{k_1 k_2}}\) as the distance metric and single linkage to construct a hierarchical clustering dendrogram; cut the tree at the desired cluster number; perform merging on the initial clusters according to this hierarchical clustering tree.

    Return the final cluster assignments.
\end{enumerate}

\paragraph{Corollary 1.} Assume that in each hyper-cylinder \(H_{k_1 k_2} = \{ y \in R^p: d(y, l_{\mu_{k_1} \mu_{k_2}}) < r_{k_1 k_2}, P_{\mu_{k_2} - \mu_{k_1}}(y - \frac{\mu_{k_1} + \mu_{k_2}}{2}) < ||\frac{3}{4} (\mu_{k_2} - \mu_{k_1})|| \}\), the overall probability density function is log-concave. Then as \(n \to \infty\), we have \(P(S_{k_1 k_2} > 1) \to 1\) for all pairs of \((C_{k_1}, C_{k_2})\) and the score \(S_{k_1 k_2}\) as defined in Step 3 of the algorithm.

\begin{proof}

This serves as a direct result of Theorem 1, by plugging in \(Q  = \{y \in R^p: d(y, l_{\mu_{k_1}\mu_{k_2}}) < r_{k_1 k_2}, P_{\mu_{k_2} - \mu_{k_1}}(y - \frac{\mu_{k_1} + \mu_{k_2}}{2}) < ||\frac{\mu_{k_2} - \mu_{k_1}}{4}||\}\), and \(a = \frac{\mu_{k_2} - \mu_{k_1}}{2}\). 

\end{proof}

As outlined in Corollary 1, the underlying foundation of our algorithm is predicated on the assumption of local log-concavity. Log-concavity is a very weak assumption, there are a lot of commonly used distributions satisfying it \citep{Bagnoli2005}. Furthermore, we only assume local pairwise log-concavity in our algorithm. This means many real-life data can have local approximations satisfying this assumption. Specifically, we want to demonstrate that this log-concave assumption is weaker than assumptions in many other K-means merging algorithms. For example, \cite{AlmodovarRivera2020} utilizes an overlap-based pairwise Gaussian kernel smoothing around the decision boundary, where the log-concavity condition can be seen as automatically holds since the Gaussian pdf is log-concave.

\subsection{Step 1: Initialization}

We start with a large \(K_{max}\), and use the jump statistic \citep{Sugar2003} to find the cluster number with maximum decrease in distortion. This is considered a ``best'' K-means results out of these \(K= 1,..., K_{max}\) in terms of spherical dispersion. We set the power of the distortion equal to \(\frac{p}{2}\) following \citep{Sugar2003}.

\subsection{Step 2: Identification of Adjacent Cluster Pairs}

By the classical duality between Voronoi diagrams and Delaunay triangulations \citep{Voronoi1908}, two cluster centers \(\mu_{k_1}\) and \(\mu_{k_2}\) are \textbf{adjacent} (i.e., their Voronoi cells share a \((p-1)\)-dimensional boundary) if and only if they are connected by an edge in the Delaunay triangulation of \(\{\mu_1, \ldots, \mu_K\}\). Identifying all such edges exactly in high-dimensional spaces is computationally intractable \citep{Polianskii2020}. The trivial upper bound on the number of adjacent pairs is \(\frac{K(K-1)}{2}\), the total number of distinct cluster pairs, which grows \emph{quadratically} with \(K\). We claim that this bound can be tightened:

\begin{proposition}[Linear adjacency bound]
\label{prop:linear_adjacent}
Under the manifold hypothesis that the data concentrate near a \((p-1)\)-dimensional submanifold of \(\mathbb{R}^p\), the number of adjacent cluster pairs is \(O(K)\) instead of \(O(K^2)\).
\end{proposition}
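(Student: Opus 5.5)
The argument will be heuristic-but-rigorous in the following sense: we assume that the K-means centers $\mu_1,\dots,\mu_K$ inherit the geometry of the data, i.e.\ they lie on (or within a small tubular neighbourhood of) a $(p-1)$-dimensional submanifold $M\subset\mathbb{R}^p$ and are roughly evenly spread on it, with bounded ratio between the largest and smallest nearest-neighbour spacing. Adjacency is equivalent to being a Delaunay edge of $\{\mu_1,\dots,\mu_K\}$, by the Voronoi--Delaunay duality recalled above. Counting adjacent pairs therefore reduces to bounding the number of Delaunay edges. It suffices to show that each center has $O(1)$ Delaunay neighbours, with the constant independent of $K$. Summing degrees and halving then gives $O(K)$.

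\textbf{Step 1 (local reduction).} Let $h$ be the typical spacing. By quasi-uniformity of the centers on a compact $M$ of dimension $d=p-1$, we have $h\asymp K^{-1/d}$. We show that every Delaunay neighbour $\mu_{k'}$ of $\mu_k$ lies within distance $c\,h$ of $\mu_k$. Suppose $\mu_k$ and $\mu_{k'}$ are Delaunay neighbours. Then some point $x$ of the nontrivial boundary $D_{kk'}$ that lies near $M$ (the data concentrate there, and we only need adjacency witnessed by data points, as in Step 2 of the algorithm) is equidistant from both centers and closer to them than to all others. Quasi-uniformity forces $\|x-\mu_k\|\lesssim h$, since otherwise another center would be closer. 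Hence $\|\mu_k-\mu_{k'}\|\le 2\|x-\mu_k\|\lesssim h$.

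\textbf{Step 2 (packing).} Because the centers are $h$-separated up to constants, a volume/packing argument on $M$ gives at most $C(c)^{d}$ centers in the geodesic ball $B_M(\mu_k,ch)$. This uses bounded curvature of $M$: at scale $h\to0$ the manifold is locally almost flat, so Euclidean and geodesic balls are comparable. This bound is independent of $K$. Each vertex degree is therefore $O(1)$, and the number of adjacent pairs is at most $\tfrac12 K\,C(c)^d=O(K)$. The same argument also shows why only data-witnessed adjacency matters. The full ambient Voronoi diagram of points in $\mathbb{R}^p$ can have $\Theta(K^{\lceil p/2\rceil})$ Delaunay edges, as for points on the moment curve. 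Those extra edges correspond to boundary regions far from $M$, which contain no data, and so Step 2 of the algorithm never detects them.

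\textbf{Main obstacle.} The crux is Step 1: justifying that Delaunay neighbours detected by data near $M$ are local. This requires the quasi-uniformity assumption on the K-means centers. I would try to obtain it from known asymptotics of optimal quantizers (Zador/Gruber-type results): optimal $K$-point quantizers of a density on a $d$-manifold have cells of diameter $\asymp K^{-1/d}$ when the density is bounded away from zero. I would state this as an explicit assumption rather than prove it for Lloyd's algorithm. A second subtlety is the tubular thickness $\tau$ of the data around $M$. Step 1 needs $\tau\lesssim h$, so that $M$ looks locally like a hyperplane slab. Otherwise, as $\tau/h$ grows, the configuration becomes genuinely $p$-dimensional and the count reverts to the ambient one. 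I would handle this by requiring $\tau=o(K^{-1/d})$ or by fixing $K$ in the regime where this holds.
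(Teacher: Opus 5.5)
\textbf{Verdict: genuine gap.} Your argument proves a weaker statement than the proposition, and the strategy of bounding each center's degree by $O(1)$ cannot reach the stated one.

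The problem is the first sentence of Step~1. The proposition, with the paper's definition of an adjacent pair, counts \emph{every} pair whose Voronoi cells share a nontrivial $(p-1)$-dimensional face anywhere in $\mathbb{R}^p$, i.e.\ every Delaunay edge of $\{\mu_1,\dots,\mu_K\}$. You cannot derive ``suppose $\mu_k,\mu_{k'}$ are Delaunay neighbours; then some point of $D_{kk'}$ lies near $M$.'' You are silently restricting to pairs whose shared face meets the data tube, which is essentially the set $A$ of Step~2.

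For general Delaunay edges the locality claim fails, and with it the reduction to bounded degree. Take centers evenly spaced on a circle in $\mathbb{R}^2$ (a $(p-1)$-manifold with $p=2$) and push one of them slightly inward. Every triangle avoiding that point has the circle as circumcircle, and the circle's interior contains the pushed point. So the Delaunay triangulation is a fan, and that center is adjacent to all $K-1$ others. The long edges have their Voronoi faces near the circle's center, far from the data, which is exactly where your Step~1 has no control. The total is still $O(K)$ in this example, but only by a global argument.

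That global argument is what the paper supplies. For $p=2$ it uses planarity and Euler's formula ($\le 3K-6$ edges), with no locality needed. For $p\ge 3$ it uses the Attali--Boissonnat and Amenta--Attali--Devillers bounds for samples of \emph{polyhedral} surfaces. Those bounds count Delaunay balls centered near the $(p-1)$-dimensional medial axis, which are precisely the far-field faces you discard. Smoothness plus quasi-uniformity cannot replace them. I recall that Erickson (``Nice point sets can have nasty Delaunay triangulations'') gives uniform samples of a smooth convex surface of bounded curvature in $\mathbb{R}^3$ with $\Omega(K^{3/2})$ Delaunay edges. This is consistent with the paper's proof having to pass to polyhedral surfaces.

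\textbf{What you do establish.} Your argument shows $|A| = O(K)$ for the pairs detected in Step~2, and that part is sound. Covering puts both nearest centers of a data point within $O(h)$ of it, on a connected support, and separation gives the packing bound. It is a legitimate and more elementary alternative for how the paper actually uses the proposition, since Step~3 only scores pairs in $A$. Unlike the cited results, it also handles curved $M$.

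Note, though, that for $|A|$ the manifold hypothesis does no real work. The same packing argument runs in $\mathbb{R}^p$, with constant $C^p$ instead of $C^{p-1}$, whenever the centers form an $h$-net of a connected data support. So your worry that a thick tube makes the count ``revert to the ambient one'' is misplaced. The ambient blow-up lives in faces away from the data, which $A$ never records. If you keep your proof, restate the claim as a bound on $|A|$ under an explicit quasi-uniformity assumption.

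A minor correction: the moment-curve figure $\Theta(K^{\lceil p/2\rceil})$ counts Delaunay simplices, not edges, since there are at most $\binom{K}{2}$ edges. Also, a curve is not a $(p-1)$-manifold, so that example does not illustrate the relevant obstruction.
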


\begin{proof}[Proof sketch]
The \(K\) cluster centers, each being the empirical mean of a subset of data points, lie approximately on the same submanifold as the data. The number of adjacent pairs equals the number of edges in the Delaunay triangulation of the \(K\) centers. For \(p = 2\), by Euler's polyhedral formula the Delaunay triangulation has at most \(3K - 6 = O(K)\) edges. For \(p \geq 3\), \citet{AttaliLinear2004} prove that \(K\) points uniformly sampled on a fixed \((p-1)\)-dimensional polyhedral surface in \(\mathbb{R}^p\) have at most
\[
    \left(1 + \frac{C_S \kappa}{2} + 5300\pi\kappa^2 \frac{L_S^2}{A_S}\right) K
\]
Delaunay edges, where \(C_S\) is the number of facets, \(A_S\) is the total area, \(L_S\) is the total boundary perimeter of the surface, and \(\kappa\) is a uniform density ratio---all constants fixed by the data geometry and independent of \(K\). This is an explicit \(O(K)\) bound. More generally, for data near a \(q\)-dimensional polyhedron in \(\mathbb{R}^p\) (\(2 \leq q \leq p-1\)), \citet{Amenta2007} establish a Delaunay complexity of \(O(K^{(p-1)/q})\), which reduces to the linear \(O(K)\) when \(q = p-1\). A complete proof is provided in Appendix~\ref{app:proof_prop1}.
\end{proof}

This linear adjacency bound is fundamental to the computational efficiency of \emph{CavMerge}: it justifies computing merging scores for only \(O(K)\) cluster pairs rather than all \(O(K^2)\) pairs, yielding substantial savings especially when \(K\) is large.

In practice, since computing the exact Delaunay triangulation is intractable in high dimensions \citep{Polianskii2020}, we adopt the \emph{approximate Delaunay triangulation} introduced in \citet{Wei2023}: for each observation \(x_i\), we find its two nearest cluster centers and mark the corresponding pair as adjacent. This computation costs \(O(nK)\) time, introduces no false positives (non-adjacent pairs cannot be misidentified as adjacent), and any false negatives correspond to cluster pairs with no data point near their shared boundary (i.e., the pairs least likely to warrant merging). A full justification is provided in Appendix~\ref{app:proof_prop1}.

\subsection{Step 3: Calculate Merging Scores}

Figure \ref{figure1} gives a graphical visualization of Step 3 in the 2-dimensional case. The number of points \(m_1, m_2, m_3\) are counted inside the three consecutive and adjacent hyper-cylinders (in this 2D example, rectangles) centered at \(\mu_{k_1}, \frac{\mu_{k_1} + \mu_{k_2}}{2}, \mu_{k_2}\), respectively. All of these 3 hyper-cylinders have central axis along \(l_{\mu_{k_1} \mu_{k_2}}\), base radius equal to \(r\), and height equal to \(||\frac{\mu_{k_2} - \mu_{k_1}}{2}||\). 

\begin{figure}[h]
    \centering
    \includegraphics[width=0.7\linewidth]{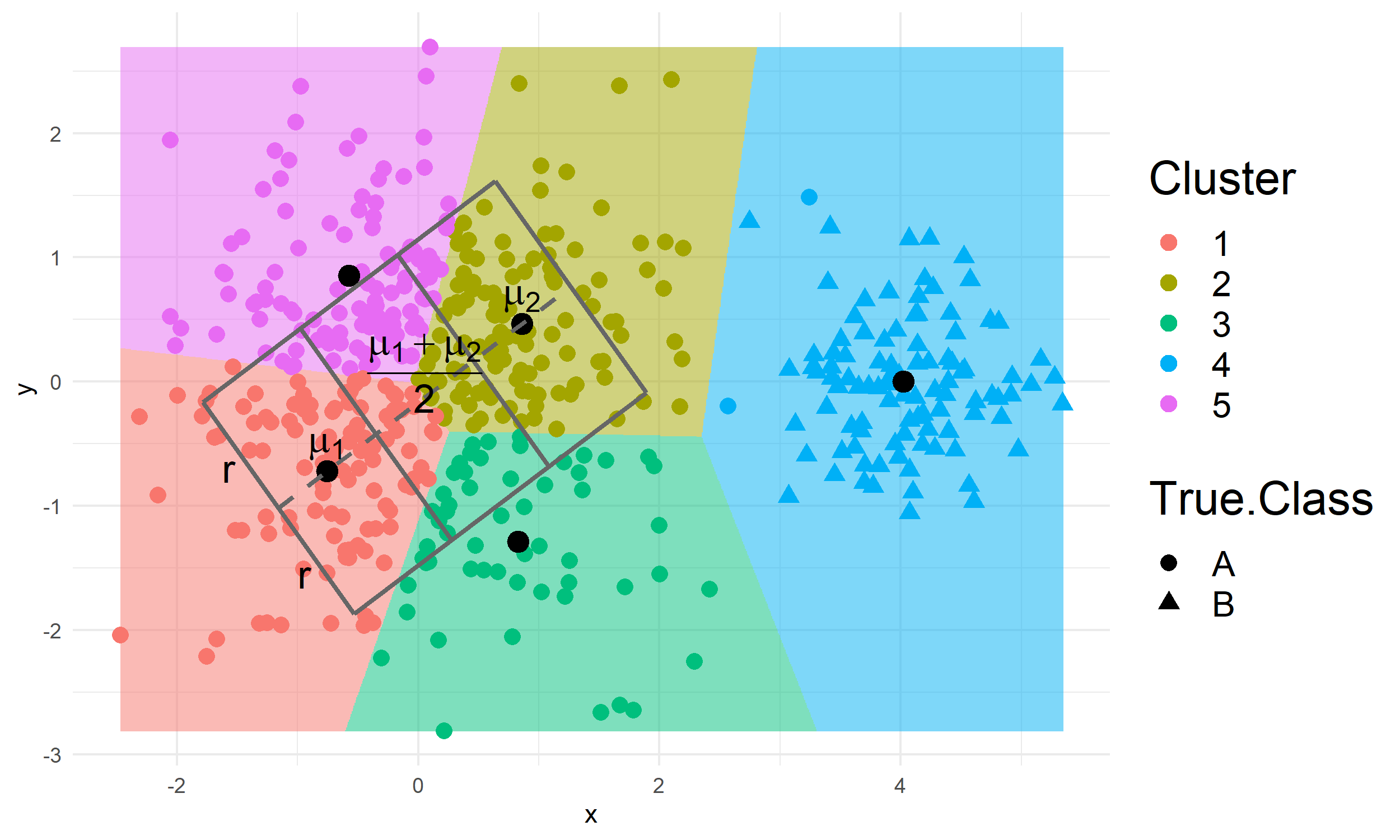}
    \caption{2D illustration of step 3.}
    \label{figure1}
\end{figure}

These three hyper-cylinders are of the same volume, so the ratio of points can be viewed as a representation of density ratios. This ratio further reflects the ``log concavity score'' of each pair of clusters, where a larger score indicates more evidence of merging this cluster pair.

\subsection{Step 4: Corner Cases}

Since our algorithm is based on the ratio of number of points, an obvious potential issue is the corner cases of initial clusters with very few observations. A cluster with very few observations indicates extremely sparse density in this subspace and is likely to return very low merging scores with all other clusters. But such a sparse cluster usually does not serve as a separate true class.

To accommodate this issue, we identify these extremely sparse clusters (\(|C_k| \leq 3\)), and link each one of them to its closest neighboring cluster.

\subsection{Step 5: Final Cluster Assignments}

As shown in Corollary 1, if the local log concavity assumption holds, theoretically we should set the threshold value \(t = 1\) and merge all cluster pairs with score greater than \(1\). However, this is usually not ideal in real-life applications with limited sample size or if the log-concavity assumption is slightly violated. 

So instead, when the cluster number is known, we use a hierarchical-clustering-type method. Note that this final merging step can be viewed as a hierarchical clustering using inverse scores \(\frac{1}{s_{k_1k_2}}\) as the distance metric and single linkage as the link function. Thus when the desired number of clusters is unknown, we construct a hierarchical clustering dendrogram using the inverse scores and cut the tree at the desired number of clusters.

\section{Performance Evaluation}
\label{sec:experiments}

In this section, we evaluate and contrast the clustering results produced by our approach with those of several competing methods. We first tested these methods on some artificial 2-dimensional datasets, previously analyzed by the competing methods in their respective studies. Then we employed these methods to some real-world, high-dimensional benchmark datasets to examine their efficiencies.

We compare the following 6 methods:

\begin{enumerate}
    \item \emph{CavMerge}: K-means merging based on log-concavity (our proposed method)

    \item \emph{SK}: Skeleton clustering \citep{Wei2023}

    \item \emph{DBSCAN}: Density-based spatial clustering of applications with noise \citep{Ester1996}

    \item \emph{k-mH}: K-means hierarchical clustering \citep{Peterson2018}
    
    \item \emph{KNOB}: KNOB-Syncytial Clustering \citep{AlmodovarRivera2020}

    \item \emph{HC}: Hierarchical clustering with Euclidean distance and single linkage
\end{enumerate}

Note that all 6 methods are capable of forming non-convex, general-shaped clusters. 

The clustering performance is evaluated by the \emph{Adjusted Rand Index (ARI)} \citep{Hubert1985} between the clustering results and the true labels. \emph{ARI} takes a value within \([0, 1]\), with the larger value indicating a better match with the true label. An \emph{ARI} of 1 indicates a perfect match. (Since \emph{ARI} makes some adjustments to the raw Rand Index, it may take values of small negative numbers, e.g., -0.002.)

\subsection{Two-Dimensional Datasets}

Here we evaluate the clustering performances of the 6 methods on 15 two-dimensional simulated datasets. Figure \ref{figure2} is a visualization of these datasets, where observations from different true classes are marked in different colors. These datasets have very different characteristics, for example convex/non-convex, varying densities, etc. Their sample sizes range from \(\sim 200\) to \(\sim 5000\), and the number of true classes range from \(2\) to \(8\). All these datasets have been used by one or more competing methods to demonstrate the effectiveness of their methods. Datasets \(1 \sim 7\) and \(9 \sim 14\) were studied by \citet{AlmodovarRivera2020}; \(2, 3, 4, 5, 10\) were studied by \citet{Peterson2018}, and \(8\) and \(15\) were studied by \citet{Wei2023}.

\begin{figure}[h]
    \centering
    \includegraphics[width=0.9\linewidth]{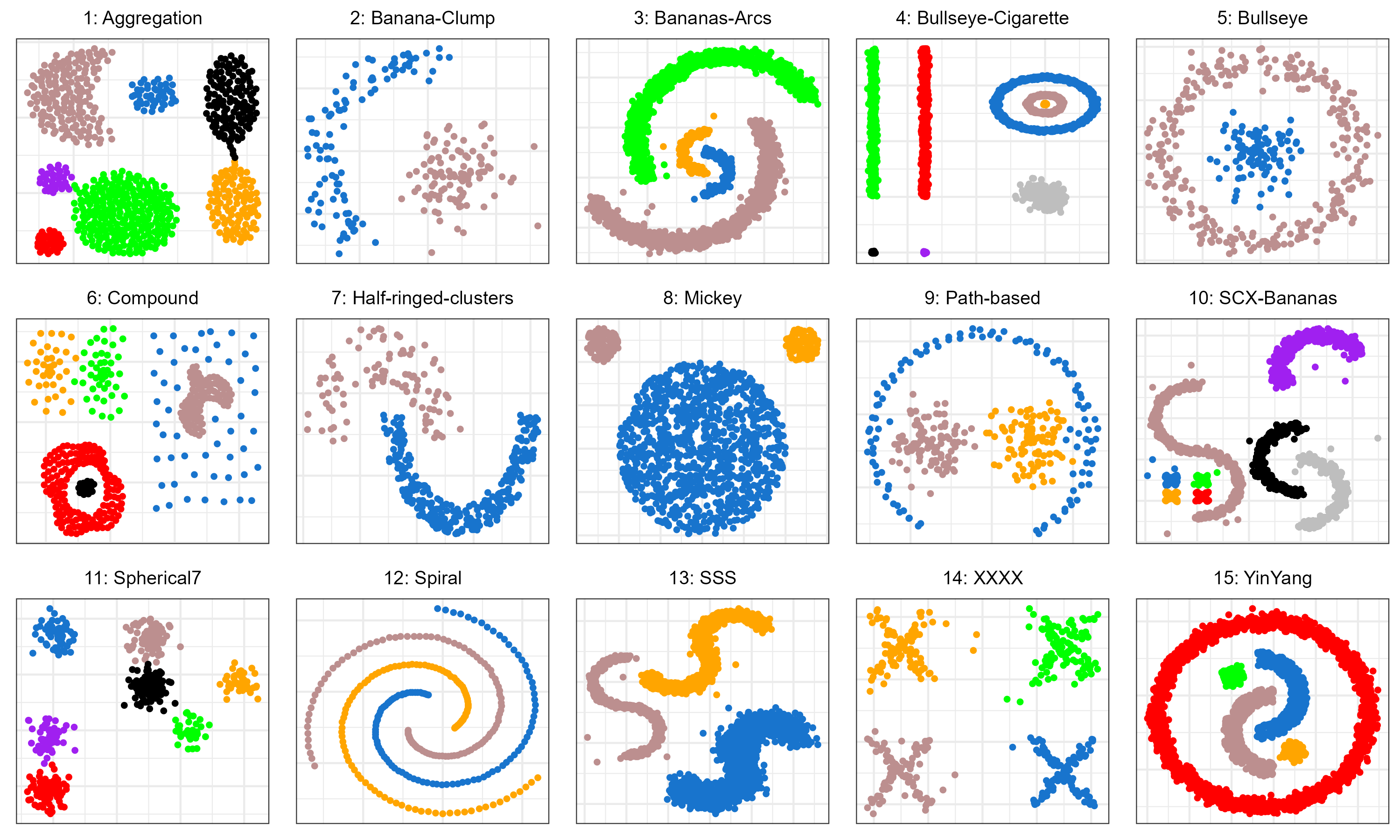}
    \caption{Fifteen 2D datasets used for performance evaluations.}
    \label{figure2}
\end{figure}

We ran each method on each dataset over 100 random seeds and measured the average performance. The number of clusters for \emph{k-mH} and \emph{KNOB} are automatically determined and cannot be set manually. For the rest 4 methods, we set the number of clusters equal to the true number of classes. To make a fair comparison among the merging methods based on K-means results (\emph{CavMerge, SK, k-mH, KNOB}), we used the same K-means clustering result (determined by the jump statistic introduced in Section \ref{sec:algo}, with 25 random starts and power \(\frac{2}{2} = 1\)) as the initial clusters for all these 4 methods. The \emph{k-mH} and \emph{DBSCAN} methods will identify some observations as ``outliers'', and we treat those as clusters of singletons according to \citep{Maitra2009}. All parameters are set to the default values in the corresponding R packages.

\begin{table}[h]
  \caption{Average \emph{ARI} of 100 random trials for each method.}
  \label{tab:1}
  \vspace{1em}
  \centering
  \newcolumntype{+}{>{\global\let\currentrowstyle\relax}}
  \newcolumntype{^}{>{\currentrowstyle}}
  \newcommand{\rowstyle}[1]{\gdef\currentrowstyle{#1}%
  #1\ignorespaces
}
  \newcolumntype{C}[1]{>{\centering\let\newline\\\arraybackslash\hspace{0pt}}m{#1}}
\begin{tabular}{+C{3cm}^C{1.8cm}^C{1.2cm}^C{1.5cm}^C{1.2cm}^C{1.2cm}^C{1.2cm}}
\hline
\rowstyle{\bfseries}
Data & CavMerge & SK & DBSCAN & k-mH & KNOB & HC \\
\hline
Aggregation & 0.990\newline (0.013) & 0.841\newline (0.098) & 0.809\newline (0) & 0.884\newline (0.102) & 0.812\newline (0.042) & 0.804\newline (0) \\
Banana-Clump & 0.565\newline (0.481) & 1\newline (0) & 0.980\newline (0) & 0.568\newline (0.222) & 0.674\newline (0.402) & 0.000\newline (0) \\
Bananas-Arcs & 0.973\newline (0.084) & 0.983\newline (0.044) & 0.981\newline (0) & 0.707\newline (0.331) & 0.837\newline (0.133) & 0.000\newline (0) \\
Bullseye & 0.992\newline (0.108) & 0.992\newline (0.005) & 0\newline (0) & 0.207\newline (0.197) & 0.652\newline (0.281) & -0.003\newline (0) \\
Bullseye-Cigarette & 0.978\newline (0.021) & 0.938\newline (0.021) & 0.958\newline (0) & 0.282\newline (0.072) & 0.874\newline (0.072) & 1\newline (0) \\
Compound & 0.754\newline (0.109) & 0.809\newline (0.011) & 0.764\newline (0) & 0.467\newline (0.149) & 0.710\newline (0.087) & 0.742\newline (0) \\
Half-ringed-clusters & 0.485\newline (0.362) & 0.270\newline (0.077) & 0.937\newline (0) & 0.0786\newline (0.052) & 0.863\newline (0.181) & 0.256\newline (0) \\
Mickey & 1\newline (0) & 0.634\newline (0.466) & 1\newline (0) & 0.347\newline (0.437) & 0.534\newline (0.292) & 1\newline (0) \\
Path-based & 0.425\newline (0.053) & 0.500\newline (0.135) & 0\newline (0) & 0.218\newline (0.166) & 0.470\newline (0.078) & 0.001\newline (0) \\
SCX-Bananas & 0.974\newline (0.087) & 0.952\newline (0.022) & 0.661\newline (0) & 0.774\newline (0.264) & 0.787\newline (0.149) & 0.171\newline (0) \\
Spherical7 & 1\newline (0) & 1\newline (0) & 0.630\newline (0) & 0.850\newline (0.156) & 1\newline (0) & 0.632\newline (0) \\
Spiral & 0.033\newline (0.016) & 0.040\newline (0.020) & 0\newline (0) & 0.018\newline (0.006) & 0.091\newline (0.072) & 1\newline (0) \\
SSS & 0.958\newline (0.136) & 1\newline (0) & 0.965\newline (0) & 0.833\newline (0.180) & 0.999\newline (0.004) & 0.000\newline (0) \\
XXXX & 0.913\newline (0.164) & 0.999\newline (0) & 0.984\newline (0) & 0.613\newline (0.321) & 0.993\newline (0.009) & 1\newline (0) \\
YinYang & 1\newline (0) & 0.775\newline (0.224) & 0.995\newline (0) & 0.810\newline (0.302) & 0.623\newline (0.293) & 1\newline (0) \\
\hline
\textbf{Overall} & \textbf{0.803} & \textbf{0.733} & \textbf{0.711} & \textbf{0.510} & \textbf{0.750} & \textbf{0.507}\\
\hline
\end{tabular}
\end{table}

Table \ref{tab:1} shows the performance of these six methods. We report the average \emph{ARI} over all 100 trials for each method in each dataset. We also provide visualization for clustering results from one random trial of our method \emph{CavMerge} in Appendix~\ref{app:vis_2d}. 

As we can see from Table \ref{tab:1}, our method has the highest overall average \(ARI = 0.803\), compared to the second best \emph{KNOB} with \(ARI = 0.750\). Apart from a few datasets, our method consistently produces similar or higher performance than all competing methods.

We also provide the average computational time for each method in Appendix~\ref{app:comp_time}. The computational time for our method is on the same scale with \emph{SK} and \emph{HC}, and is more than 10 times faster than \emph{KNOB} and \emph{k-mH}.

\subsection{High-Dimensional Datasets}

We are also interested in the performance of our method in real-world high-dimensional datasets. In this section, we test the 6 clustering algorithms introduced in the previous section on 5 benchmark datasets. The Olive Oils data can be accessed through R package \href{https://cran.r-project.org/web/packages/FlexDir/index.html}{FlexDir} and the rest 4 datasets can be found in the \href{https://archive.ics.uci.edu/ml/index.php}{UCI machine learning repository}. All these 5 datasets have been used by existing papers to evaluate their clustering performance. A detailed introduction of each dataset can be found in Appendix~\ref{app:data_desc}.

The hyper-parameters (if any) of each clustering method are set to be the default values given in their corresponding R packages; the number of minimum neighboring points for \emph{DBSCAN} is set to be 5 as it seems to be a reasonable choice regarding the sample size, and results have relatively high clustering performance. For methods that utilize merging of K-means results, we set the initial K-means number of starts equal to 100. The number of clusters for \emph{k-mH} and \emph{KNOB} are automatically determined and cannot be set manually. For the rest 4 methods, we set the number of clusters equal to the true number of classes.

Ten random trials of each clustering method were performed on each dataset, and the mean and standard error of the 10 \emph{ARI} values were reported. The clustering performance is presented in Table \ref{tab:2}. The best performer of each dataset is marked in blue, and the runner-up is marked in red.

\begin{table}[h]
  \caption{Average \emph{ARI} and standard error on benchmark data.}
  \label{tab:2}
  \bigbreak
  \centering
  \newcolumntype{+}{>{\global\let\currentrowstyle\relax}}
  \newcolumntype{^}{>{\currentrowstyle}}
  \newcommand{\rowstyle}[1]{\gdef\currentrowstyle{#1}%
  #1\ignorespaces
}
  \newcolumntype{C}[1]{>{\centering\let\newline\\\arraybackslash\hspace{0pt}}m{#1}}
  \begin{tabular}{+C{1.5cm}^C{1.8cm}^C{1.5cm}^C{1.5cm}^C{1.5cm}^C{1.5cm}^C{1.5cm}}
    \hline
    \rowstyle{\bfseries}
    Data & CavMerge & SK & DBSCAN & k-mH & KNOB & HC \\
    \hline
    Digits & \textcolor{blue}{0.720}\newline (0.046) & \textcolor{red}{0.551}\newline (0.082) & 0.048\newline (0) & -\newline (-) & 0.128\newline(0.098) & 0\newline (0) \\
    \hline
    Olive Oils & \textcolor{blue}{0.637}\newline (0.078) & \textcolor{red}{0.557}\newline (0.094) & -0.002\newline (0) & 0.210\newline (0.012) & 0.125\newline(0.161) & -0.001\newline (0) \\
    \hline
    Ecoli & \textcolor{red}{0.685}\newline (0.086) & \textcolor{blue}{0.722}\newline (0.068) & 0.047\newline (0) & 0.420\newline (0.194) & 0.249\newline(0.031) & 0.041\newline (0) \\
    \hline
    Iris & \textcolor{blue}{0.589}\newline (0.097) & 0.579\newline (0.054) & 0.564\newline (0) & 0.507\newline (0.011) & \textcolor{red}{0.587}\newline(0.060) & 0.564\newline (0) \\
    \hline
    Seeds & \textcolor{red}{0.377}\newline (0.171) & 0.222\newline (0.022) & 0\newline (0) & 0.142\newline (0.102) & \textcolor{blue}{0.512}\newline(0.159) & 0.002\newline (0) \\
    \hline
    \textbf{Overall} & \textbf{\textcolor{blue}{0.602}} & \textbf{\textcolor{red}{0.526}} & \textbf{0.131} & \textbf{0.320} & \textbf{0.320} & \textbf{0.121}\\
    \hline
  \end{tabular}
  \caption*{(All 10 trials of \emph{km-H} failed for the Digits dataset)}
\end{table}

Our proposed \emph{CavMerge} algorithm achieves the best performance in 3 of the 5 datasets, and is the second best in the rest 2 datasets. Specifically, \emph{CavMerge} achieves a significantly higher clustering accuracy than all other methods in the handwritten digits dataset, which is the dataset with the largest sample size (\(>5000\)) and feature dimension (64), and is considered the most difficult one among the 5 datasets.

\section{Conclusions and Discussions}
\label{sec:conc}

This paper introduces a novel clustering method for merging K-means clusters based on log-concavity. This approach is characterized by its minimal distributional assumptions, intuitive geometric rationale, and efficient computational process. It exhibits enhanced performance in both simulated 2-dimensional datasets as well as real-world, higher-dimensional datasets.

We provide some detailed discussion and possible extensions below.

\subsection{Comparison to Skeleton Clustering}

\citet{Wei2023} introduced skeleton clustering, a syncytial clustering method using a framework similar to our method. As shown in Section~\ref{sec:experiments}, our method achieves better clustering performance than skeleton clustering in most datasets. Here we briefly state the difference between the two methods.

The skeleton clustering algorithm estimates the density (vertice/face/tube) between each pair of cluster centers by some kernel function, obtains a pairwise density matrix, and finds a cutoff to merge clusters with large pairwise density. Since it involves kernel smoothing, many hyper-parameters need to be determined or tuned before performing the analysis. Also, if the densities differ significantly in different true classes, then merging in a high-density class will be much more favored than in a low-density class.

On the other hand, our algorithm uses the log-concave property and computes a score at each decision boundary using density ratios. Compared to the pairwise density measure in \citet{Wei2023}, this score is completely non-parametric and scale-free. All pairwise scores are on the same scale, and a global comparison among the scores is fair. A more detailed discussion and some examples are presented in Appendix~\ref{app:compare_sk}.

\subsection{Discussions and Extensions}

Our proposed method is based on point counting, thus the sample size has a big impact on the clustering performance. When the sample size is small, for instance, \(n < p^2\), we suggest to combine our method with bootstrapping methods on the initial K-means clusters, e.g. \citet{Maitra2012}.

Note that although we use K-Means clustering as the initial cluster assignment in this paper, our method is not restricted to K-means but applies to other clustering methods, especially those that form convex clusters. For instance, our method can be applied directly to the K-Means clustering with feature selection such as \citet{Witten2010} and \citet{Zhang2020}, simply by only keeping the selected features when calculating the merging score. Another straightforward extension is to combine this log-concave property with model-based clustering algorithms. For instance, \citet{Baudry2010} introduced a clustering method based on combining Gaussian model components, which fits perfectly into our framework.

Apart from cluster analysis, this idea of merging based on log-concavity can also be used in other fields, for instance determining the number of mixing components in the context of finite mixture modeling \citep{Manole2021}.

\section*{Acknowledgment}

[To be added.]


\bigskip
\begin{center}
{\large\bf SUPPLEMENTAL MATERIALS}
\end{center}

\begin{description}

\item[R code and data:] R package implementing the \emph{CavMerge} algorithm, along with all simulation datasets used in Section~\ref{sec:experiments}. (GNU zipped tar file)

\end{description}


\appendix

\section{Proof of Theorem 1}
\label{app:proof_thm1}

We restate Theorem 1 here and provide its proof: 

\paragraph{Theorem 1.} Assume that a \(p\)-dimensional random variable \(X\) has log concave pdf \(f(\cdot)\), and let \(x_1, ..., x_n \in R^p\) denote a realization of this distribution with \(n\) observations. For any subspace \(Q \subset R^p\) and any vector \(a \in R^p\), let

\begin{align*}
    m_1 = \sum^n_{i=1} 1(x_i+a \in Q),\quad m_2 = \sum^n_{i=1} 1(x_i \in Q),\quad m_3 = \sum^n_{i=1} 1(x_i-a \in Q)
\end{align*}

denote the number of observations in subspaces \(Q-a, Q, Q+a\), respectively. Further, assume that the cumulative densities on those three subspaces are all positive. 

\begin{enumerate}
    \item[(a)] For any given \(\epsilon,\delta > 0\) and if \(m_1 m_3 > 0\), there exists a \(N\) such that for all \(n>N\), we have \(P(\frac{m^2_2}{m_1 m_3} > 1-\epsilon) > 1-\delta\)
    
    \item[(b)] The convergence rate of \(\frac{m^2_2}{m_1 m_3}\) to \(C = \frac{(\int_{U} f(x) dx)^2}{\int_{U-a} f(x) dx \int_{U+a} f(x) dx}\) is \(O(\frac{1}{\sqrt{n}})\).
\end{enumerate}

\begin{proof}

Let random variables \(U_{1i} = 1(X_i+a \in U), U_{2i} = 1(X_i \in U), U_{3i} = 1(X_i-a \in U)\) be the indicator of \(X\) falling in \(U-a, U, U+a\), respectively. By the Weak Law of Large Numbers:

\begin{align*}
    \bar U_{1n} &\overset{p}{\to} \int_{U-a} f(x) dx \equiv P_1\\
    \bar U_{2n} &\overset{p}{\to} \int_{U} f(x) dx \equiv P_2\\
    \bar U_{3n} &\overset{p}{\to} \int_{U+a} f(x) dx \equiv P_3\\
    \frac{\bar U^2_{2n}}{\bar U_{1n} \bar U_{3n}} &= \frac{m^2_2 / n^2}{m_1/n \cdot m_3 / n} = \frac{m^2_2}{m_1 m_3} \\
    &\overset{p}{\to} \frac{P^2_2}{P_1 P_3} \equiv C
\end{align*}

where \(\bar U_{kn} = \frac{\sum^n_{i=1} U_{ki}}{n}, k = 1,2,3\). The last convergence follows from the product rule for convergence in probability.

Since \(f(\cdot)\) is log concave, we have \(\log f(x-a) + \log f(x + a) \leq 2 \log f(x), \forall a \in R^p\). Taking the exponential and simply applying a double integral over \(x \in U\) on both sides yields \(\int_{U-a} f(x) dx \int_{U+a} f(x) dx \leq (\int_{U} f(x) dx)^2\), i.e., \(C = \frac{P^2_2}{P_1 P_3} \geq 1\). 

Thus for any \(\epsilon, \delta > 0\), there exists a \(N\) such that for all \(n > N\),

\begin{align*}
    P(|\frac{\bar U^2_{2n}}{\bar U_{1n} \bar U_{3n}} - C| > \epsilon) &= P(|\frac{m^2_2}{m_1 m_3} - C| > \epsilon) < \delta\\
    P(\frac{m^2_2}{m_1 m_3} > 1-\epsilon) &\geq P(\frac{m^2_2}{m_1 m_3} > C-\epsilon) \\
    &= P(\frac{m^2_2}{m_1 m_3} - C > -\epsilon)\\
    &\geq P(|\frac{m^2_2}{m_1 m_3} - C| < \epsilon) \\
    &> \delta
\end{align*}

This is the end of the proof for part (a).

For part (b), similar to the proof of part (a), we can get 

\[\frac{m_1}{n} \frac{m_3}{n} = \bar U_{1n} \bar U_{3n} \overset{p}{\to} \int_{U-a} f(x) dx \int_{U+a} f(x) dx \equiv P_1 P_3
\]

and thus \(\frac{m^2_2}{m_1 m_3} = \frac{\bar U_{2n}}{\bar U_{1n} \bar U_{3n}} \overset{p}{\to} \frac{\bar U^2_{2n}}{P_1 P_3}\). As \(n \to \infty\) and when \(m_1 m_3 \neq 0\):

\begin{align*}
    P(|\frac{m^2_2}{m_1 m_3} - C| > \epsilon) &\approx  P(|\frac{\bar U^2_{2n}}{P_1 P_3} - C| > \epsilon) = P(|\frac{\bar U^2_{2n}}{P_1 P_3} - \frac{P^2_2}{P_1 P_3}| > \epsilon) = P(|\frac{\bar U^2_{2n} - (E \bar U_{2n})^2}{P_1 P_3}| > \epsilon)\\
    &\leq P(|\bar U^2_{2n} - (E \bar U_{2n})^2| > \epsilon)\\
    &\leq P(|\bar U_{2n} - E \bar U_{2n}| > \frac{\epsilon}{2}) \\
    &= P(|\sum^n_{i=1} 1(X_i \in U) - E[\sum^n_{i=1} 1(X_i \in U)]| > \frac{n \epsilon}{2})\\
    &\leq 2 \exp (-2\frac{(n \epsilon / 2)^2}{n}) = 2 \exp (-\frac{n \epsilon^2}{2})
\end{align*}

The first approximation is the direct result of \(\frac{m^2_2}{m_1 m_3} \overset{p}{\to} \frac{\bar U^2_{2n}}{P_1 P_3}\);

The first \(\leq\) follows from the fact that \(P_1 P_3 \in (0, 1]\);

The second \(\leq\) comes from the fact that \(\bar U_{2n}\) is bounded within \([0, 1]\), thus \(|\bar U_{2n} + E \bar U_{2n}| \in (0, 2]\);

The third \(\leq\) follows directly from Hoeffding's inequality.

Thus the convergence rate of \(\frac{m^2_2}{m_1 m_3}\) to \(C = \frac{P^2_2}{P_1 P_3}\) is \(O(\frac{1}{\sqrt{n}})\):

\begin{align*}
    \delta &\leq 2 \exp (-\frac{n \epsilon^2}{2})\\
    \epsilon &\leq \sqrt{\frac{2}{n} (\log 2 - \log \delta)}\\
    |\frac{m^2_2}{m_1 m_3} - C| &= O_p(\frac{1}{\sqrt{n}})
\end{align*}

\end{proof}

\section{Proof of Proposition~\ref{prop:linear_adjacent} and Detailed Explanation of Step~2}
\label{app:proof_prop1}

This section provides a complete proof of the linear adjacency bound (Proposition~\ref{prop:linear_adjacent}) and gives a full justification for the approximate identification strategy used in Step~2 of \emph{CavMerge}.

\paragraph{Background: Voronoi diagrams and Delaunay triangulations.}

The Voronoi diagram of the \(K\) cluster centers \(\mu_1, \ldots, \mu_K\) partitions \(\mathbb{R}^p\) into \(K\) convex Voronoi cells, where the cell of \(\mu_k\) is
\[
V_k = \{x \in \mathbb{R}^p : \|x - \mu_k\| \leq \|x - \mu_{k'}\|, \; \forall k' \neq k\}.
\]
Two cells \(V_{k_1}\) and \(V_{k_2}\) share a \((p-1)\)-dimensional boundary---i.e., the pair \((C_{k_1}, C_{k_2})\) is adjacent---if and only if the centers \(\mu_{k_1}\) and \(\mu_{k_2}\) are connected by an edge in the \emph{Delaunay triangulation} \(\mathrm{Del}(\{\mu_1, \ldots, \mu_K\})\), the dual graph of the Voronoi diagram. Equivalently, \(\mu_{k_1}\) and \(\mu_{k_2}\) are Delaunay neighbors if and only if there exists an empty sphere (a sphere whose open interior contains no cluster center) passing through both \(\mu_{k_1}\) and \(\mu_{k_2}\). In high-dimensional spaces, computing this triangulation exactly is intractable \citep{Polianskii2020}; Step~2 therefore uses an efficient approximation described at the end of this section.

\paragraph{The naive bound and its inadequacy.}

The trivial upper bound on the number of adjacent pairs is \(\frac{K(K-1)}{2}\), the total number of distinct cluster pairs. In full generality, the Delaunay triangulation of \(K\) points in \(\mathbb{R}^p\) can have up to \(O(K^{\lceil p/2 \rceil})\) simplices (McMullen's Upper Bound Theorem). This worst case is achieved only when the points lie on a one-dimensional moment curve---a highly degenerate configuration that does not arise in practice. The cluster centers in \emph{CavMerge} are empirical means of data subsets, and as such inherit the geometric structure of the data. The following proposition formalizes the resulting improvement.

\paragraph{Proposition~\ref{prop:linear_adjacent} (Linear adjacency bound, restated).}

\begin{proposition}
Under the manifold hypothesis that the data concentrate near a \((p-1)\)-dimensional submanifold of \(\mathbb{R}^p\), the number of adjacent cluster pairs equals the number of edges in the Delaunay triangulation of the \(K\) cluster centers, which is \(O(K)\).
\end{proposition}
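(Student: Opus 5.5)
The argument has two parts. First, I identify adjacent cluster pairs with Delaunay edges of the centers. Second, I bound the number of such edges under the stated geometric hypothesis. The hypothesis is informal, so I will make it precise in the form the cited results need. I will assume the data, and hence the \(K\) centers, are distributed near a fixed \((p-1)\)-dimensional polyhedral surface \(S\) with a sampling density bounded above and below by constants. The ratio of these bounds is the uniformity constant \(\kappa\).

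For the identification step, I use the definitions in Section~\ref{sec:notation}. A pair \((C_{k_1},C_{k_2})\) is adjacent exactly when some point \(x\) is equidistant from \(\mu_{k_1},\mu_{k_2}\) and strictly closer to them than to every other center. Take the sphere centered at \(x\) through \(\mu_{k_1}\) and \(\mu_{k_2}\). Its open interior contains no center, so it is an empty sphere and \([\mu_{k_1},\mu_{k_2}]\) is a Delaunay edge. Conversely, an empty sphere through the two centers whose open interior and boundary contain no third center has a center \(x\) witnessing nontriviality. In the degenerate case, a small perturbation of \(x\) along the bisector produces such a witness, provided the centers are in general position. I will state general position as an assumption. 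It holds almost surely when the data have a density.

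For the counting step, I split by dimension. For \(p=2\) the Delaunay graph is planar. Euler's formula \(V-E+F=2\), together with \(3F\le 2E\), gives \(E\le 3K-6\). For \(p\ge3\) I invoke \citet{AttaliLinear2004}. For a \(\kappa\)-uniform sample of \(K\) points on a polyhedral surface in \(\mathbb{R}^p\), they give at most \(\bigl(1+\tfrac{C_S\kappa}{2}+5300\pi\kappa^2 L_S^2/A_S\bigr)K\) Delaunay edges, where \(C_S\), \(A_S\), \(L_S\) are the facet count, area and boundary perimeter of \(S\). The constant depends only on \(S\) and \(\kappa\), so the count is \(O(K)\). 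The same conclusion also follows from the \(O(K^{(p-1)/q})\) bound of \citet{Amenta2007} with \(q=p-1\).

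The main obstacle is justifying that the cluster centers, not just the data, satisfy the sampling hypotheses of these theorems. K-means centers are averages over convex cells, so they lie off a curved or polyhedral surface. For small cells they sit at distance \(O(\mathrm{diam}(C_k)^2)\) from it. Their spacing is also not literally a uniform random sample. I would handle this in two steps. First, I would argue that the centers form a \(\kappa\)-uniform \(\varepsilon\)-sample in the deterministic sense that Attali--Boissonnat actually use: every ball of radius \(\varepsilon\) on \(S\) contains between \(1\) and \(\kappa\) centers. This follows from K-means optimality when the density is bounded above and below, since the cells then have comparable diameters. Second, I would argue that a small normal perturbation of a well-spaced sample preserves the linear edge bound. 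Failing a clean proof of the perturbation step, I would state it as an explicit assumption that the centers lie on \(S\). The proposition then follows directly from the cited bound.
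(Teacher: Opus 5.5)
Your route is the paper's: Euler's formula for $p=2$, the edge bound of \citet{AttaliLinear2004} for $p\ge 3$, and \citet{Amenta2007} with $q=p-1$ as a backup. You are more careful than the paper on two points.

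\begin{enumerate}
\item The paper states the Voronoi--Delaunay correspondence as an unqualified equivalence, whereas you note that equality needs general position. For the $O(K)$ bound only the forward inclusion is needed: an adjacent pair lies on a sphere with every other center strictly outside. That direction requires no genericity at all.
\item You see that the sampling hypothesis has to hold for the centers, not the data. The paper simply asserts that ``the $K$ centers form an approximate sparse sample of $S$''.
\end{enumerate}

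The gap is in how you propose to discharge the second point. The claim that an $(\varepsilon,\kappa)$-sample ``follows from K-means optimality when the density is bounded above and below'' does not go through. Standard K-means returns only a local optimum of the within-cluster sum of squares, and nothing there controls the ratio of cell diameters. The quantization results that do give such control concern globally optimal quantizers of the population distribution as $K\to\infty$. They say nothing about the finite-sample partition at the $K$ chosen by the jump statistic.

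The perturbation step is not a formality either. The Attali--Boissonnat argument hinges on an empty ball meeting a facet's supporting plane in a disk free of sample points, hence of radius $O(\varepsilon)$. Once points sit off the plane, a ball can meet the plane in a large disk while slipping past points lying slightly on the far side. Tiny perturbations can change Delaunay complexity drastically. For example, the $2n$ points $(i,0,0)$ and $(0,j,0)$, $1\le i,j\le n$, are coplanar and have $O(n)$ Delaunay edges. Lift the second family to $(0,j,d)$ with any $d>0$. The sphere centered at $(i,j,(i^2-j^2+d^2)/(2d))$ then passes through $(i,0,0)$ and $(0,j,d)$ with all other points strictly outside, so all $n^2$ cross pairs become adjacent.

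Nor can you swap the polyhedral surface for a curved one. Erickson (2003) shows that a uniform sample of a smooth convex surface of bounded curvature in $\mathbb{R}^3$ can have $\Omega(n^{3/2})$ Delaunay complexity. He also constructs smooth surfaces on which every good sample has near-quadratic complexity. For generic smooth surfaces the known bound is only $O(n\log n)$ (Attali, Boissonnat and Lieutier).

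So $O(K)$ does not follow from the manifold hypothesis as such. What is needed, and what the paper's proof assumes without comment, is that the centers lie on a fixed polyhedral surface and form an $(\varepsilon,\kappa)$-sample of it. Make that the explicit hypothesis, and the proposition becomes a direct citation.

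One further correction, which you share with the paper: \citet{AttaliLinear2004} is proved only for two-dimensional polyhedral surfaces in $\mathbb{R}^3$. For $p\ge 4$ the bound must come from \citet{Amenta2007} with $q=p-1$.
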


\begin{proof}
The \(K\) cluster centers, each being the mean of a subset of data points drawn from a distribution near a submanifold, lie approximately on that submanifold. Under the manifold hypothesis, this submanifold is a \((p-1)\)-dimensional polyhedral surface \(S \subset \mathbb{R}^p\), and the \(K\) centers form an approximate sparse sample of \(S\). We establish the \(O(K)\) bound in two cases.

\medskip
\textbf{Case 1: \(p = 2\).}

The Delaunay triangulation of any finite set of \(K\) points in the plane \(\mathbb{R}^2\) is a planar graph. By Euler's polyhedral formula, a simple planar graph on \(K\) vertices and \(E\) edges satisfies \(K - E + F = 2\), where \(F\) is the number of faces. Since each face has at least 3 bounding edges and each edge borders at most 2 faces, we get \(2E \geq 3(F-1)\), which combined with Euler's formula gives \(E \leq 3K - 6\). Therefore, the number of adjacent cluster pairs is at most \(3K - 6 = O(K)\), regardless of any distributional assumption.

\medskip
\textbf{Case 2: \(p \geq 3\).}

We invoke two classical results from computational geometry.

\medskip
\textit{Result 1 (Attali \& Boissonnat, 2004).} Let \(S \subset \mathbb{R}^p\) be a fixed polyhedral surface, i.e., a finite union of interior-disjoint planar facets of total area \(A_S\), total boundary perimeter \(L_S\), and \(C_S\) facets. Consider \(K\) points forming an \((\varepsilon, \kappa)\)-sample of \(S\), meaning: for every point \(x\) on any facet \(F\) of \(S\), the ball \(B(x, \varepsilon)\) contains at least one sample point from \(F\), and no ball \(B(x, 2\varepsilon)\) contains more than \(\kappa\) sample points from \(F\). Under this condition, \citet{AttaliLinear2004} prove that the total number of Delaunay edges of the \(K\) sample points is at most
\begin{align}
    \left(1 + \frac{C_S \kappa}{2} + 5300\pi\kappa^2 \frac{L_S^2}{A_S}\right) K. \label{eq:attali_bound}
\end{align}

The constant in \eqref{eq:attali_bound} depends only on the fixed surface \(S\) (through \(C_S\), \(A_S\), \(L_S\)) and the sampling density ratio \(\kappa\), but \emph{not} on \(K\). Hence the bound is of the form \(c \cdot K\) for a constant \(c > 0\), i.e., \(O(K)\).

To understand why this bound is linear, \citet{AttaliLinear2004} decompose the sample points into two zones. In the \(\varepsilon\)-\emph{regular zone} (points in the interior of the facets, at distance \(> \varepsilon\) from any edge), each sample point \(x\) has at most \(C_S \kappa\) Delaunay neighbors: any empty sphere through \(x\) intersects the supporting plane of each facet in a disk of radius at most \(\varepsilon\), which by the sampling condition contains at most \(\kappa\) sample points. The \(\varepsilon\)-\emph{singular zone} (points near edges of \(S\)) contains at most \(O(\kappa L_S / (\varepsilon \sqrt{A_S}) \cdot \sqrt{K}) = O(\sqrt{K})\) sample points, and their Delaunay edges contribute at most \(O(K)\) edges in total. Summing both contributions yields the linear bound \eqref{eq:attali_bound}.

\medskip
\textit{Result 2 (Amenta, Attali \& Devillers, 2007).} For the more general case of a \(q\)-dimensional polyhedron \(\mathbb{P} \subset \mathbb{R}^p\) with \(2 \leq q \leq p-1\), \citet{Amenta2007} prove that the Delaunay triangulation of \(K\) points forming a sparse \(\varepsilon\)-sample of \(\mathbb{P}\) has complexity
\begin{align}
    O\!\left(K^{(p-1)/q}\right). \label{eq:amenta_bound}
\end{align}
The key structural insight is that the \emph{medial axis} of any \(q\)-dimensional polyhedron in \(\mathbb{R}^p\) (the set of points equidistant from two or more points on \(\mathbb{P}\)) is always \((p-1)\)-dimensional, regardless of the dimension \(q\) of \(\mathbb{P}\). Since the number of Delaunay balls is governed by the medial axis dimension (\(p-1\)), while the number of sample points \(K\) grows as \(\varepsilon^{-q}\) (determined by the polyhedron dimension \(q\)), one obtains the relation \(|\text{Delaunay balls}| = O(\varepsilon^{-(p-1)}) = O(K^{(p-1)/q})\).

When \(q = p - 1\)---the case where the data manifold is a \((p-1)\)-dimensional hypersurface---the exponent becomes \((p-1)/(p-1) = 1\), and the bound \eqref{eq:amenta_bound} reduces to \(O(K)\). This is precisely the setting of Proposition~\ref{prop:linear_adjacent}: data concentrated near a \((p-1)\)-dimensional submanifold of \(\mathbb{R}^p\).

\medskip
Combining both cases, the number of adjacent cluster pairs is \(O(K)\) whenever the cluster centers concentrate near a \((p-1)\)-dimensional submanifold of \(\mathbb{R}^p\).
\end{proof}

\paragraph{Remark on lower-dimensional manifolds.}

If the data lie near a manifold of intrinsic dimension \(q < p-1\), the bound \eqref{eq:amenta_bound} gives \(O(K^{(p-1)/q})\). This is still strictly better than the quadratic bound \(O(K^2)\) whenever \(q > (p-1)/2\). In real datasets, even when the feature dimension \(p\) is large, the intrinsic dimension \(q\) of the data is typically much smaller than \(p\) but usually \(q \geq p/2\), making the bound sub-quadratic.

\paragraph{The approximate identification strategy and its justification.}

Since computing the exact Delaunay triangulation is intractable in high-dimensional spaces \citep{Polianskii2020}, Step~2 identifies adjacent pairs via the following approximation: for each observation \(x_i\), find its two nearest cluster centers \(\mu_{k_1(i)}\) and \(\mu_{k_2(i)}\), and declare \((C_{k_1(i)}, C_{k_2(i)})\) adjacent. The collection of all such pairs is
\[
    A = \left\{\bigcup_{i=1}^n (C_{k_1(i)}, C_{k_2(i)})\right\}.
\]
This approximation has two important properties:

\begin{itemize}
    \item \textbf{No false positives.} Suppose \((C_{k_1}, C_{k_2}) \in A\), so there exists some \(x_i\) with \(\mu_{k_1}\) and \(\mu_{k_2}\) as its two nearest centers. Then \(x_i\) lies in the region
    \[
        B_{k_1 k_2} = \bigl\{x \in \mathbb{R}^p : \max(\|x - \mu_{k_1}\|,\, \|x - \mu_{k_2}\|) < \|x - \mu_{k'}\|,\; \forall\, k' \neq k_1, k_2\bigr\}.
    \]
    The region \(B_{k_1 k_2}\) is nonempty if and only if \(C_{k_1}\) and \(C_{k_2}\) share a nontrivial decision boundary---i.e., they are adjacent. Therefore, no non-adjacent pair is ever included in \(A\).

    \item \textbf{Harmless false negatives.} If a truly adjacent pair \((C_{k_1}, C_{k_2})\) is not identified (no observation falls in \(B_{k_1 k_2}\)), then the shared Voronoi boundary between the two clusters contains no observed data points. In this situation, the two clusters almost certainly belong to different underlying distributions (there is a ``cavity'' between them in the data), and omitting their merging score does not harm the final result: a missing score for a between-distribution pair simply means that pair is not merged, which is the correct outcome.
\end{itemize}

Not every pair of clusters is adjacent, and calculating scores for all \(\frac{K(K-1)}{2}\) cluster pairs is redundant. The identification step above ensures that only \(O(K)\) adjacent pairs need to be evaluated in Step~3, making the overall algorithm computationally efficient even for large \(K\).

\section{Visualization of \emph{CavMerge} Results on 2D Data}
\label{app:vis_2d}

In Figure~\ref{supp_figure1} we provide the results from one random trial (seed = 1) of our method \emph{CavMerge} on the fifteen 2D datasets:

\begin{figure}[h]
    \centering
    \includegraphics[width=0.9\linewidth]{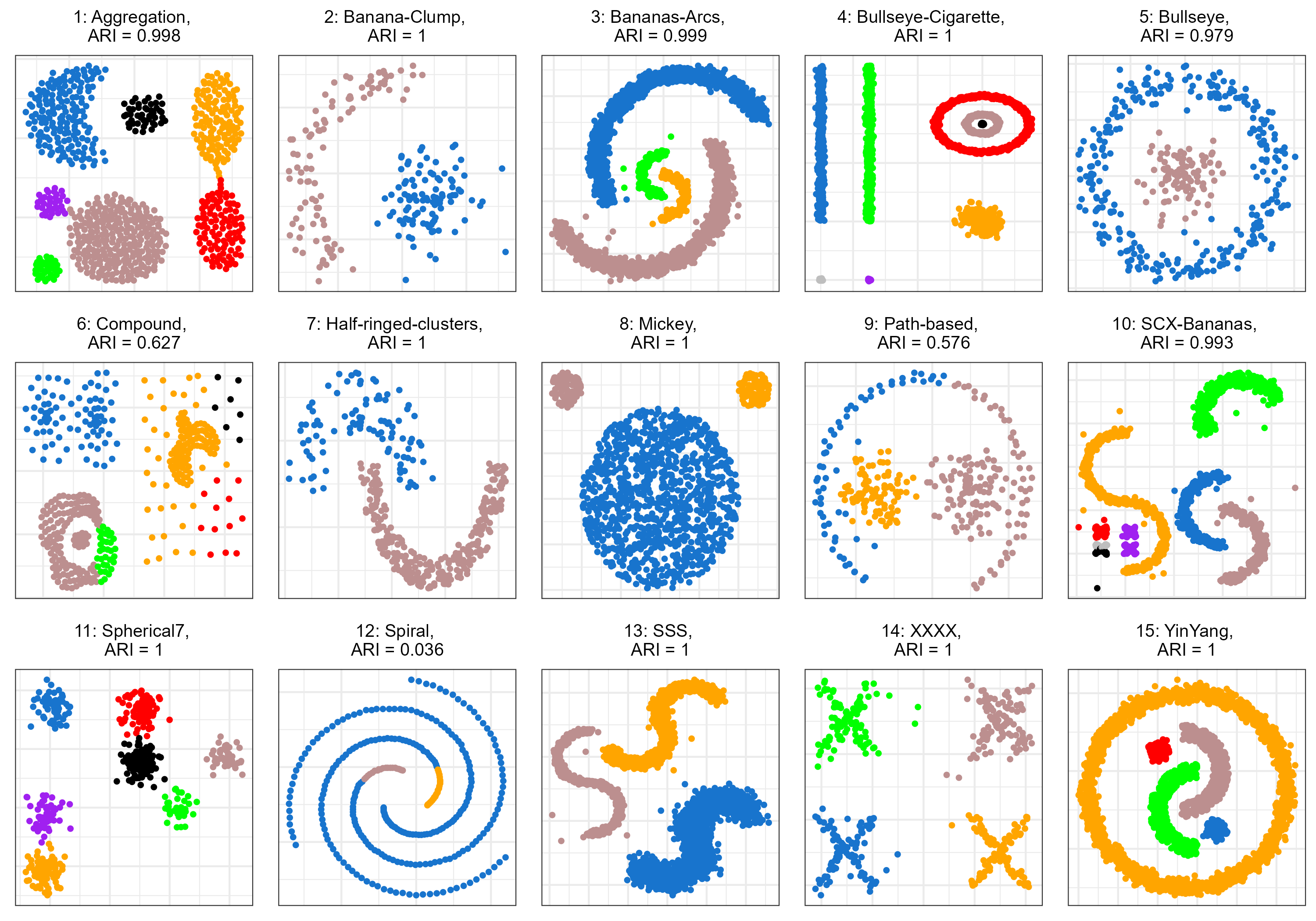}
    \caption{Performance of \emph{CavMerge} on fifteen 2D datasets.}
    \label{supp_figure1}
\end{figure}

\section{Computational Time for 2D Data}
\label{app:comp_time}

Table~\ref{tab:comp_time} shows the average computational time for each method on the fifteen 2D datasets. The computational time for \emph{CavMerge} algorithm is on the same scale with the \emph{Skeleton Clustering} and \emph{Hierarchical Clustering}, and is more than 10 times faster than \emph{KNOB} and \emph{k-mH}. 

(We also note that the \emph{SK} and \emph{k-mH} methods failed in some random trials of certain datasets. For both the computational time here and the average \emph{ARI} in Table~\ref{tab:1}, we only take into account the successful trials.)

\begin{table}[h]
  \caption{Average computational time of 100 random trials for each method.}
  \label{tab:comp_time}
  \vspace{1em}
  \centering
  \newcolumntype{+}{>{\global\let\currentrowstyle\relax}}
\newcolumntype{^}{>{\currentrowstyle}}
\newcommand{\rowstyle}[1]{\gdef\currentrowstyle{#1}%
  #1\ignorespaces
}
\begin{tabular}{+l^c^c^c^c^c^c}
    \hline
    \rowstyle{\bfseries}
    Data & CavMerge & SK & DBSCAN & k-mH & KNOB & HC \\
    \hline
    Aggregation & 0.209 & 0.195 & 0.015 & 30.4 & 3.39 & 0.017 \\
    Banana-Clump & 0.048 & 0.026 & 0.001 & 5.07 & 1.70 & 0.002 \\
    Bananas-Arcs & 0.427 & 0.349 & 0.019 & 31.2 & 14.4 & 0.627 \\
    Bullseye & 0.074 & 0.033 & 0.003 & 15.2 & 2.22 & 0.006 \\
    Bullseye-Cigarette & 0.368 & 0.214 & 0.015 & 16.8 & 11.1 & 0.350 \\
    Compound & 0.065 & 0.047 & 0.002 & 14.4 & 2.41 & 0.005 \\
    Half-ringed-clusters & 0.058 & 0.032 & 0.002 & 14.1 & 2.21 & 0.005 \\
    Mickey & 0.101 & 0.116 & 0.008 & 2.77 & 4.81 & 0.042 \\
    Path-based & 0.068 & 0.043 & 0.002 & 9.70 & 2.84 & 0.004 \\
    SCX-Bananas & 0.374 & 0.176 & 0.017 & 22.8 & 12.2 & 0.308 \\
    Spherical7 & 0.048 & 0.011 & 0.003 & 21.4 & 1.20 & 0.008 \\
    Spiral & 0.098 & 0.040 & 0.002 & 10.7 & 2.10 & 0.003 \\
    SSS & 1.70 & 0.407 & 0.026 & 9.59 & 15.8 & 0.865 \\
    XXXX & 0.070 & 0.034 & 0.003 & 16.1 & 2.06 & 0.005 \\
    YinYang & 0.335 & 0.203 & 0.017 & 16.6 & 10.3 & 0.330 \\
    \hline
    \textbf{Overall} & \textbf{0.270} & \textbf{0.118} & \textbf{0.009} & \textbf{15.785} & \textbf{6.234} & \textbf{0.172}\\
    \hline
\end{tabular}
\end{table}

\section{Data Description for Section~\ref{sec:experiments}}
\label{app:data_desc}

Below is a short introduction to each dataset used in Section~\ref{sec:experiments}:

\begin{enumerate}
    \item \textbf{Optical Recognition of Handwritten Digits.}

    The \href{https://archive-beta.ics.uci.edu/dataset/80/optical+recognition+of+handwritten+digits}{handwritten digits recognition data} \citep{Alpaydin1998} consists of 5620 pictures of handwritten digits by 43 different people. Each picture was first collected as a \(32 \times 32\) bitmap, then divided into non-overlapping blocks of \(4 \times 4\), and the number of pixels are counted in each block. In the end the input dimension of each observation is \(8 \times 8 = 64\) attributes, with each element taking values in \(\{1,2,...,16\}\). 

    \item \textbf{Olive Oils.}

    The olive oils dataset measures 8 different chemical components for 572 samples of olive oil. The samples are taken from 9 different areas in Italy: Coast-Sardinia, Inland-Sardinia, East-Liguria, West-Liguria, Umbria, Calabria, North-Apulia, South-Apulia, and Sicily. The olive oils dataset is a popular dataset for cluster analysis. It has been studied in \cite{Wei2023} and \cite{AlmodovarRivera2020}.

    \item \textbf{Ecoli.}

    The \href{https://archive-beta.ics.uci.edu/dataset/39/ecoli}{ecoli} dataset includes the identification of protein localization sites for the E.coli bacteria \citep{Jain1999, Nakai1992}. The original dataset contains 336 samples and 7 numerical attributes, with one attribute taking values in \{0.5, 1\} and over 99\% equal to 0.5. So we use the rest 6 attributes for clustering. The true labels have 8 categories. This dataset is also used in \cite{AlmodovarRivera2020}.

    \item \textbf{Iris.}

    The famous \href{https://archive-beta.ics.uci.edu/dataset/53/iris}{iris} data consists of 150 instances in 3 different species, each species having 50 samples. There are 4 numerical attributes: sepal length, sepal width, petal length, and petal width. 

    \item \textbf{Seeds.}

    The \href{https://archive-beta.ics.uci.edu/dataset/236/seeds}{seeds} data consists of the geometrical properties of kernels belonging to 3 different types of wheat: Kama, Rosa, and Canadian. The dataset has 7 numerical attributes and 210 instances, each type of wheat has 70 samples.
\end{enumerate}

\section{Comparison to Skeleton Clustering in \cite{Wei2023}}
\label{app:compare_sk}

After we developed our \emph{CavMerge} algorithm, we noticed that a recently proposed method \emph{Skeleton Clustering} by \citet{Wei2023} uses a similar structure to ours. Specifically, our definition for the three hyper-cylinders in Step~3 of our algorithm is related to the ``Tube Density'' as described in \cite{Wei2023}. However, instead of their direct estimation of densities via kernel smoothing, our method uses log concavity and measures the densities ratios via the ratio of points. In Section~\ref{sec:experiments}, we showed through some experimental studies that our algorithm achieves better performance. 

Here we give a short introduction about the \emph{Skeleton Clustering} method, and provide some comparison of differences between the two methods.

The method introduced in \citet{Wei2023} views the initial K-means clusters as the knots in a graph, and use several different methods to measure the edge weights between pairs of knots. Specifically, the ``Tube Density'' takes a tube-like hyper-cylinder areas between each pair of cluster centers, measures the minimum disk density (the marginal density on a 2D circle) along each tube, and constructs an adjacency matrix based on these density measures (The densities are estimated via local Gaussian kernel smoothing). Since it measures the minimum disk density, this technique is capable of identifying ``gaps'' (i.e., whether there is a space in the tube that only very few / no observations falls into) between two clusters that are faraway and should not be merged together, therefore provides reliable final clustering results.

There are two major differences between our \emph{CavMerge} algorithm and the \emph{Skeleton Clustering} algorithm:

\begin{itemize}
    \item The \emph{CavMerge} algorithm is purely non-parametric. 

    Our algorithm does ``nothing but counting points'', it involves no parameter estimate of any kind. The only unknown parameter in \emph{CavMerge} is the threshold for cutting the hierarchical tree, or equivalently, the number of clusters. If the number of final clusters is pre-determined, no hyper-parameter is involved in the \emph{CavMerge} algorithm.

    On the other hand, the \emph{Skeleton Clustering} involves tuning of several hyper-parameters, e.g., kernel bandwidth, rate adjustment for the bandwidth, disk radius, etc.

    \item The \emph{CavMerge} algorithm is scale-free.

    The \emph{Skeleton Clustering} estimates local density, thus these estimates depend on the level of absolute density of true distributions. They are also highly related to the choice of hyper-parameters. 
    
    On the other hand, our \emph{CavMerge} algorithm seeks to calculate a merging score based on density ratios. This score is unitless (ratio of densities) and can be compared globally. 
    
    In other words, if we consider a dataset with mixture of two distributions, one with high density and the other with low density, then the density estimates might vary in magnitude for cluster pairs in those two true classes.
\end{itemize}

Here we present one clustering result of the \emph{Aggregation} dataset: We performed K-means clustering with 28 clusters (determined by the jump statistic), and use the clustering results for the initial clusters of \emph{CavMerge} and \emph{Skeleton Clustering} algorithms. The results are shown in Figure~\ref{supp_figure2}: (a) shows the initial 28 clusters; (b) and (c) shows the final merging results of \emph{CavMerge} and \emph{Skeleton Clustering} algorithms.

\begin{figure}[h]
    \centering
    \includegraphics[width=0.9\linewidth]{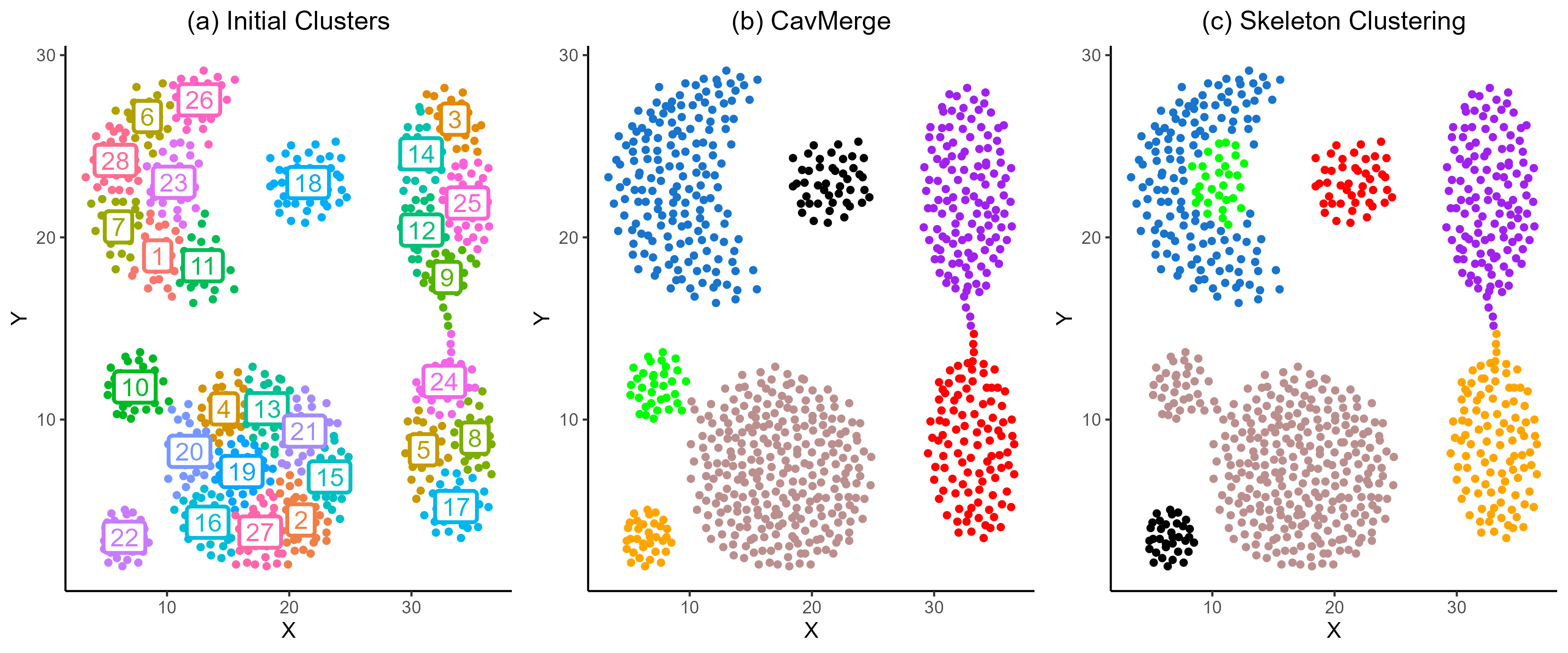}
    \caption{Visualization for: (a) 28 initial K-means clusters; (b) Merging results for \emph{CavMerge}; (c) Merging results for \emph{Skeleton Clustering}.}
    \label{supp_figure2}
\end{figure}

As we can see, the \emph{Skeleton Clustering} links cluster 10 with cluster 20, and separates cluster 23 from its neighbors. If we dig deeper into the scores (for \emph{CavMerge}) or weights (for \emph{Skeleton Clustering}) of these cluster pairs, we get:

\begin{itemize}
    \item $\min score\{(23, 1), (23, 6), (23, 7), (23, 11), (23, 26), (23, 28)\} = 0.721$

    \item $score(10, 20) = 0.22$

    \item $\max weight\{((23, 1), (23, 6), (23, 7), (23, 11), (23, 26), (23, 28))\} = 3.529$

    \item $weight(10, 20) = 5.439$
\end{itemize}

i.e., \emph{CavMerge} links cluster 23 to all its neighbors and separates cluster 10 from 20, while \emph{Skeleton Clustering} doesn't link cluster 23 to any of its neighbors but identifies cluster 10 and 20 as adjacent. We suspect this is probably due to the density difference between the true class of cluster 10 and the true class that cluster 20 belongs to.

\bibliographystyle{plainnat}
\bibliography{merging}

\end{document}